\newcommand{\Sv}{\mathbf{S}}
\newcommand{\onev}{\mathbf{1}}
\newcommand{\tbs}{t_\mathrm{bs}}
\newcommand{\tn}{t_\mathrm{d}}
\newcommand{\TDW}{\overline{T}_\mathrm{dw}}
\newcommand{\TDD}{\overline{T}_{\eta }}
\title{MDS-Coded Distributed Storage\\ for Low Delay Wireless Content Delivery}
\begin{document}
\author{
\IEEEauthorblockN{Amina Piemontese and Alexandre Graell i Amat}
\IEEEauthorblockA{Department of Signals and Systems, Chalmers University of Technology, Gothenburg, Sweden}
%\IEEEauthorblockA{}
\thanks{Amina Piemontese is supported by a Marie Curie fellowship (contract 658785-DISC-H2020-MSCA-IF-2014). This work was also was partially funded by the Swedish Research Council under grant \#2011-5961.}}
\maketitle

%%%%%%%%%%%%%%%%%%%%%%%%%%%%%%%%%%%%%%%%%%%%%%%%% ABSTRACT %%%%%%%%%%%%%%%%%%%%%%%%%%%%%%%%%%%%%
\begin{abstract}
We address the use of maximum distance separable (MDS) codes for distributed storage (DS) to enable efficient content delivery in wireless networks. Content is stored in a number of the mobile devices and can be retrieved from them using device-to-device communication or, alternatively, from the base station (BS). We derive an analytical expression for the download delay in the hypothesis that the reliability state of the network is periodically restored. Our analysis shows that MDS-coded DS can dramatically reduce the download time with respect to the reference scenario where content is always downloaded from the BS. 
\end{abstract}

%%%%%%%%%%%%%%%%%%%%%%%%%%%%%%%%%%%%%%%%%%%%%% INTRODUCTION %%%%%%%%%%%%%%%%%%%%%%%%%%%%%%%%%%%%%
\section{Introduction}

The proliferation of mobile devices and the surge of a myriad of multimedia applications has resulted in an exponential growth of the mobile data traffic. In this context, 
wireless caching has emerged as a powerful technique to overcome the backhaul bottleneck, by reducing the backhaul rate and the delay in retrieving content from the network. The key idea is to store popular content closer to the end users. In \cite{Shan13}, a novel system architecture named \emph{femtocaching} was proposed. It consists of deploying a number of small base stations (BSs) with large storage capacity, in which content is stored during periods of offpeak traffic. The mobile users can then download the content from the small BSs, which results in a higher throughput per user. In \cite{Gol14}, it was proposed to store content directly in the mobile devices.  Users can then retrieve content from neighboring devices using device-to-device (D2D) communication or, alternatively, from the serving BS. 

In both scenarios content may be stored using an erasure correcting code, which brings gains with respect to uncoded caching. The use of erasure correcting codes establishes an interesting link between distributed caching for content delivery and distributed storage (DS) for data storage. The key difference is that in the wireless network scenario, data can be downloaded from the storage nodes (the small BSs or the mobile devices) but also from a serving BS, which has always the content available. Therefore, the reliability requirements in DS for data storage can be relaxed. In \cite{Bio15}, the placement of content encoded using a maximum distance separable (MDS) code to small BSs was investigated and it was shown that the backhaul rate can be significantly reduced. In \cite{Paa13}, for the scenario where content is stored in the mobile devices, the repairing of the lost data when a device storing data leaves the network was considered. Assuming  instantaneous repair, the communication cost of data download and repair was investigated. In \cite{Ped15,Ped16}, a repair scheduling where repair is performed periodically was introduced and analytical expressions for the overall communication cost of content download and data repair as a function of the repair interval were derived. Using these expressions, the communication cost entailed by DS using MDS codes, regenerating codes \cite{Dim10}, and locally repairable codes \cite{Pap14} was evaluated in \cite{Ped15,Ped16}.

In this paper, as in \cite{Paa13,Ped15,Ped16}, we consider the scenario where content is stored in the mobile devices, which arrive and depart from a cell according to a Poisson random process. In particular, we assume that content is stored using MDS codes. Our focus is on the delay of retrieving content from the network, which was not considered in \cite{Paa13,Ped15,Ped16}. We derive analytical expressions for the download delay and show that MDS-encoded DS can significantly reduce the delay with respect to the case where content is solely downloaded from the BS.

%The use of erasure correcting codes, establishes an interesting link between distributed caching for content delivery and distributed storage for data storage. The key difference is that in the wireless network scenario, data can be downloaded from the storage nodes (the helpers or the mobile devices) but also from a serving BS, which always have the content available. Therefore, , can be relaxed.

%%%%%%%%%%%%%%%%%%% SYSTEM MODEL %%%%%%%%%%%%%%%%%%%%%%%%%%%%%%%%%%%%%%%%%%%%%%%%%%%%%%%%%%%%%
\section{System Model}\label{s:system_model}
We consider a single cell in a cellular network where mobile devices, referred to as nodes, roam in and out according to a Poisson random process and request a single file at random times. The file is stored in a number of the mobile devices using an MDS code. A copy of the file is also available at the BS serving the cell. %In the following, the mobile devices in the network are referred to as nodes.

Nodes arrive according to a Poisson process with exponential independent, identically distributed (i.i.d.) random inter-arrival time $T_a$ with probability density function (pdf)
\begin{equation}\nonumber
f_{T_a}(t)=M\lambda e^{-M\lambda t}, \qquad \lambda\geq 0, t\geq 0,
\end{equation}
where $M\lambda$ is the expected arrival rate of a node and $t$ is time, measured in time units (t.u.).  The nodes stay in the cell for an i.i.d. exponential random lifetime $T_ \ell$ with pdf
\begin{equation}\nonumber
f_{T_\ell}(t)=\mu e^{-\mu t}, \qquad \mu\geq 0, t\geq 0,
\end{equation}
where $\mu$ is the expected departure rate of a node. We assume that $\mu$=$\lambda$, which implies that the expected number of nodes in the network is $M$.

We assume that nodes request the file at random times with i.i.d. random inter-request time $T_r$ with pdf
\begin{equation}\label{e:request time}
f_{T_r}(t)=\omega e^{-\omega t}, \qquad \omega\geq 0, t\geq 0,
\end{equation}
where $\omega$ is the expected request rate per node. 

The file is partitioned into $k$ packets, called symbols, and is encoded into $n$ coded symbols using an $(n,k)$ MDS erasure correcting code of rate $R$ = $k/n$. The encoded data is stored into $n$ nodes, referred to as storage nodes, and hence each storage node stores one symbol. In the rest of the paper, for ease of language, we will sometimes refer to the set of storage nodes as the DS network. For simplicity, we assume $n\ll M$, hence the probability that the number of nodes in the cell is smaller than $n$ is negligible.

In this work we focus on the download process. Each node in the cell can request the file and attempts to retrieve it from the DS network using D2D communication. If the file cannot be completely retrieved from the storage nodes, the BS assists in providing the missing coded symbols. Thanks to the MDS property, the file can be recovered collecting any subset of $k$ coded symbols. The download of a symbol from a storage node incurs $\tn$ t.u., and from the BS $\tbs$ t.u.. We assume that $\tbs\gg \tn$ due to the congestion of the BS-to-node link and the fact that D2D communication occurs over a better channel due to the reduced distance between the involved nodes. We further assume that only one D2D link at a time can be established, and that the D2D communication does not interfere with the communication between the BS and the nodes. Therefore, if the DS network is not idle, the whole file is downloaded from the BS. Moreover, to simplify the analysis, we assume that multiple BS-to-node links can coexist.
\begin{comment}
A requesting node downloads as many coded symbols as possible (up to $k$) from the DS network. If less than $k$ storage nodes are present in the cell at the time of the request or the storage nodes leave the cell before the requesting node has collected $k$ symbols, the file cannot be retrieved completely from the DS network. Then, the requesting node resorts to BS download for the remaining coded symbols. We also assume that the download from the DS network fails if the requesting node itself leaves the cell before collecting $k$ symbols. In this case, the download is also completed from the BS. 
\end{comment}
Here we do not address the repair problem of restoring the initial state of reliability of the DS network when storage nodes leave the cell~\cite{Ped15,Ped16}. In particular, we assume that the BS keeps track of the storage nodes and repair is performed every $\Delta$ t.u. and, for simplicity, we assume that incurs no transmission delay. Alternatively, we can also assume that nodes arriving in the cell can bring content. This corresponds to the case where the same content is also stored in mobile devices in adjacent cells. Nevertheless, they do not join the DS network instantaneously, but the BS serving the cell periodically updates and broadcasts the list of storage nodes every $\Delta$ t.u.. Therefore, our model complies with both cases.  Similar to~\cite{Ped15,Ped16}, the parameter $\Delta$ is referred to as the repair interval in the sequel.

%%%%%%%%%%%%%%%%%%%%%%%%%%%%%%%%%%%%%%%%%%%%%%%%%%%%%%%%%%%%%%%%%%%

\section{Average File Download Delay}\label{s:download}
Our performance measure is the download time, referred to as download delay. A node which requests the file is allowed to use D2D communication only if the DS network is idle. Therefore, we introduce the binary random variable (RV) $I\in\{0,1\}$ which describes the status of the DS network. $I=1$ if the network is idle and $I=0$ otherwise. 
If the DS network is idle, the requesting node uses D2D communication to download as many coded symbols as possible (up to $k$) from the DS network and turns to the BS to recover possible missing symbols. If the DS network is occupied, the file is entirely downloaded from the BS.  The average file download delay, $\TDW$, can then be computed as
\begin{equation}\label{e:Tdw}
\TDW=   \text{Pr}\{I=1\} (  \TDD + (k-\eta) \tbs  )  +    \text{Pr}\{I=0\} k \, \tbs \, ,
\end{equation}
where $\eta$ is the average number of coded symbols downloaded using D2D communication per request and $\TDD$ is the corresponding delay. In the following, $\TDD$ is referred to as the average D2D download delay. 

The first step in our derivation is the computation of the probability that the DS network is idle. Let $I^{(\ell)}$ be the status of the network at the time of the $\ell$th request. We have
\begin{equation}\label{e:idle}
\Pr\{ I=1\} =\lim_{L \to \infty} \frac{1}{L} \sum_{\ell=1}^L \Pr\{ I^{(\ell)}=1\}.
\end{equation}
In order to compute $\Pr\{ I^{(\ell)}=1\}$, we introduce the RV $W^{(j)}$ that denotes the time instant of the $j$th request. Also, let $T^{(j)}$ be the time during which the DS network is occupied by the $j$th request. The DS network is idle at the time of the  $\ell$th request if none of the previous requests is still using D2D communication. Therefore, $\Pr\{ I^{(1)}=1\}=1$ and 
\begin{equation}\label{e:P_I exact}
\Pr\{ I^{(\ell)}=1\}\!=\!\prod_{i <\ell} \! \Pr\{  W^{(\ell)} \!\!>  \!\!W^{(\ell-i)} \!+ T^{(\ell-i)}  \}, \,\ell>1 .
\end{equation}
Assuming that if the DS network is occupied at time $W^{(\ell)} $ is because of the $(\ell-1)$th request, the product in (\ref{e:P_I exact}) reduces to the term involving the $(\ell-1)$th request only, i.e.,
\begin{align}\label{e:P_I approx1}
\Pr\{ I^{(\ell)}=1\}&\simeq \Pr\{  W^{(\ell)} >  W^{(\ell-1)} + T^{(\ell-1)}  \} \\
&= \int_0^\infty \Pr\{  W^{(\ell)} >  W^{(\ell-1)} + t  \} f_{T^{(\ell-1)} }(t) d t\, ,\nonumber
\end{align} 
where $ f_{T^{(\ell-1)} }(t)$ is the pdf of  $T^{(\ell-1)} $. Since the requests are i.i.d. with inter-request time distributed as in (\ref{e:request time}) and on average there are $M$ nodes in the cell,  we can compute
\begin{equation}\nonumber
\Pr\{  W^{(\ell)} >  W^{(\ell-1)}+ t  \} = e^{-\omega M t}\, ,\quad t\geqslant 0,\quad\ell>1\, ,
\end{equation}
and (\ref{e:P_I approx1}) can be written as 
\begin{equation}\nonumber
\Pr\{ I^{(\ell)}=1\}\simeq \mathbb{E}_{T^{(\ell-1)}} \{  e^{-\omega M T^{(\ell-1)}}  \} ,\quad\ell>1,
\end{equation}
where $\mathbb{E}_x\{\cdot\}$ represents the expectation with respect to the variable $x$. If $\omega T^{(\ell-1)} \ll 1$,
\begin{equation}\label{e:P_I approx2}
e^{-\omega M T^{(\ell-1)}} \simeq 1-\omega M T^{(\ell-1)}
\end{equation}
and
\begin{align}
\Pr\{ I^{(\ell)}=1\} & \simeq  \mathbb{E}_{T^{(\ell-1)}} \{  e^{-\omega M T^{(\ell-1)}}  \} \simeq   \nonumber\\
                           & \simeq  \mathbb{E}_{T^{(\ell-1)}} \{   1-  \omega M T^{(\ell-1)}   \} \nonumber\\
                           & = 1-\omega M  \TDD \Pr\{ I^{(\ell-1)}=1\} .\label{e:idle_ell}
\end{align}
In~(\ref{e:idle_ell}), we have used the fact that the average D2D download delay is independent of the specific request (if $\ell$ is sufficiently large). This result is proven in Lemma~\ref{lemma}. Substituting (\ref{e:idle_ell}) in (\ref{e:idle}) and after some simple calculations, we obtain
\begin{equation}\nonumber
\Pr\{ I=1\} = \frac{1}{1+\omega M \TDD} \, .
\end{equation} 
Note that in the expression above, with some abuse of notation, we use equal sign to avoid carrying all the way the approximation sign due to the approximations introduced in (\ref{e:P_I approx1}) and (\ref{e:P_I approx2}). 
 
We now consider the computation of the average D2D download delay and the average number of coded symbols downloaded using D2D per request. We assume that a node cannot download in parallel from multiple nodes, but it serially tries to download $k$ symbols from the DS network. When a node requests the file, if the DS network is idle, it randomly choses one of the storage nodes from the list supplied by the BS. After each downloaded symbol, the requesting node randomly choses the next storage node among those belonging to the list and still alive.
%\footnote{The requesting node uses the storage nodes alive at the moment of its request even if, during the download process, new storage nodes populate the DS network after the periodic restoration.} 
We assume that a requesting node that has collected less than $k$ symbols turns to the BS when all the reference storage nodes  left or when the download of a symbol fails, even if other storage nodes are available. To simplify the analysis, we assume that both cases (the failed symbol download and the absence of storage nodes) incur $\tn$ t.u., even if the node could contact the BS earlier. We also assume that the download from the DS network fails if the requesting node itself leaves the cell before collecting $k$ symbols. In this case, the download is also completed from the BS. 

The download from the storage nodes can be fully successful or only partially accomplished. In order to describe the D2D download, we define $S_1$ the binary RV which describes the success of download at the first attempt. More precisely, $S_1=1$ represents the successful download of the coded symbol from the first contacted storage node. If download is not successful from the first contacted storage node, $S_1=0$. Similarly, we define $S_j$ the binary RV describing the download at the $j$th attempt and we denote by $\Sv_{[i]}$, $i\ge 1$ the random vector  ($S_1,..., S_i$). In the following, $\onev_{j}$ represents the all-ones vector of length $j$.

According to our model, the requesting node completes the download of $k$ symbols from the DS network in $k \tn$ t.u. with probability $\Pr\{ \Sv_{[k]}=\onev_k \} $, while the partial download of $j<k$ symbols happens with probability \mbox{$\Pr \{\Sv_{[j]}=\onev_{j}, S_{j+1}=0 \}$} and incurs $(j+1)\tn$ t.u.. In the computation of the average D2D download delay, we also consider the case where download from the DS network completely fails. The corresponding probability is $ \Pr \{S_1=0\}$ and the delay is $\tn$. 
To simplify the analysis, we do not take into account the fact that the request may originate from a storage node, i.e., we do not consider that a storage node needs to download $k-1$ symbols instead of $k$.
Finally, the average D2D download delay $\TDD$ and the average number of D2D downloaded symbols $\eta$ are given by
\begin{align}
\eta= &k \Pr\{ \Sv_{[k]}=\onev_k \}  + \sum_{j=1}^{k-1} j   \Pr \{\Sv_{[j]}=\onev_{j}, S_{j+1}=0 \} \nonumber\\
\TDD= & \tn \Big( \eta +  \Pr \{S_1=0\} \!+  \!\! \sum_{j=1}^{k-1}   \Pr \{\Sv_{[j]}=\onev_{j}, S_{j+1}=0 \}\Big).  \nonumber  
\end{align}
In the next section, we derive $\Pr \{S_1=0\}$, $\Pr\{ \Sv_{[k]}=\onev_k \}$, and $\Pr \{\Sv_{[j]}=\onev_{j}, S_{j+1}=0 \}$.

%%%%%%%%%%%%%%%%%%%%%%%%%%%%%%%%%%%%%%%%%%%%%%%%%%%%%%%%%%%%%%%%%%%%%%%%%%%

\section{Probability of D2D Download}\label{s:D2D}
In this section, we derive the probability that the content is fully recovered from the DS network, $\Pr\{ \Sv_{[k]}=\onev_k \} $, the probability that it is only partially recovered, $\Pr \{\Sv_{[j]}=\onev_{j}, S_{j+1}=0 \}$, and the probability that no symbols can be downloaded from the DS network, $ \Pr \{S_1=0\}$. We also show that the average D2D download time $\TDD^{(\ell)} \triangleq\mathbb{E} \{ T^{(\ell)} \}$  does not depend on the specific request if $\ell$ is sufficiently large. 

We introduce of the following RVs and events.
\begin{itemize}
\item $S_i^{(\ell)}\in\{0,1\}$ is the binary RV describing the successful symbol download at the $i$th attempt of the $\ell$th request. 
\item $X_i^{(\ell)}\in\{0,\ldots,n\}$ is the number of storage nodes available at the time of the $i$th attempt of the $\ell$th request, i.e., the available storage nodes not yet contacted. In~\cite{Ped15}, it was shown that the probability that there are $x_1$ storage nodes at the instant of the $\ell$th request, $X_1^{(\ell)}=x_1$, does not depend on $\ell$ (when $\ell$ grows large), and is given by
\begin{align}
&\Pr\{ X_1^{(\ell)}=x_1\}=\nonumber \\
&\frac{1}{\Delta} \sum_{i'=x_1}^n \!\!\!\frac{1-p_{i'}}{\mu_{i'}} \!\!\prod_{{\substack{ j=x_1 \\ j\neq i' }} }^n \frac{j}{j-i'}- \frac{1}{\Delta} \!\sum_{i'=x_1\!+\!1}^n \!\!\frac{1-p_{i'}}{\mu_{i'}} \!\!\prod_{{\substack{ j=x_1\!+\!1 \\ j\neq i' }} }^n \frac{j}{j-i'}\, ,\nonumber
\end{align}
where $\mu_{i'}=i'\mu$ and $p_i'=e^{-\mu_{i'}\Delta}$. 

To ease notation in the remainder of the paper, we define $h(x_1)\triangleq \Pr\{ X_1^{(\ell)}=x_1\}$.
\item $F_i^{(\ell)}  \in \{0,\ldots,n\}$ is the number of departures in $\tn$ t.u. among the $X_i^{(\ell)}$ storage nodes available at the time of the $i$th attempt of the $\ell$th request. We are interested in the probability $\Pr\{ F_i^{(\ell)}=f | X_i^{(\ell)}=x\}$. Its derivation is similar to that of $\Pr\{X_1^{(\ell)}=x_1 \}$. We obtain
\begin{align}\nonumber
&\Pr\{ F_i^{(\ell)}=f | X_i^{(\ell)}=x\}=\nonumber\\
&\sum_{i'=x-f}^x \!\!\! e^{-\mu_{i'} \tn} \!\!\! \prod_{{\substack{ j=x-f \\ j\neq i' }} }^x \frac{j}{j-i'}- \! \!\! \! \! \sum_{i'=x-f+1}^x \!\!\!  e^{-\mu_{i'} \tn} \!\!\! \!  \prod_{{\substack{ j=x-f+1 \\ j\neq i' }} }^x \frac{j}{j-i'}\, .\nonumber
\end{align}
The probability above is independent of $\ell$ and $i$ and we define $g(f,x)\triangleq \Pr\{ F_i^{(\ell)}=f | X_i^{(\ell)}=x\}$. It follows that $g(f,x)=0$ if $f>x$ and $g(0,0)=1$.
\item $D^{(\ell)}$ is the departure time of the node which places the $\ell$th request.
\item $\mathcal{A}_i^{(\ell)}=\{  D^{(\ell)} -  W^{(\ell)} > i \tn\}$ is the event that the node which places the $\ell$th request stays in the network for more than $ i \tn $ t.u. from the start of the download. The corresponding probability does not depend on $\ell$ and is given by
\begin{equation}\nonumber
\Pr\{ \mathcal{A}_i^{(\ell)}\} =e^{-i\mu \tn}\, .
\end{equation}
We define $a_i\triangleq \Pr\{ \mathcal{A}_i^{(\ell)}\}$.
\item $\mathcal{B}_i^{(\ell)}=\{ (i-1)\tn < D^{(\ell)} -  W^{(\ell)}  < i \tn\}  $ is the event that the node which places the $\ell$th request departs after the $(i-1)$th download attempt but before the $i$th one. The probability of this event is
\begin{equation}\nonumber
\Pr\{ \mathcal{B}_i^{(\ell)}\} =e^{-(i-1)\mu \tn}(1-e^{-\mu \tn})\, 
\end{equation}
and is independent of $\ell$. We define $b_i\triangleq \Pr\{ \mathcal{B}_i^{(\ell)}\}$.
\end{itemize}
In the following, the probabilities $\Pr\{ S_j^{(\ell)} |   X_j^{(\ell)},  F_j^{(\ell)},  \mathcal{A}_j^{(\ell)}\}$ are computed by ignoring the fact that the download request may originate from a storage node itself. The goodness of this approximation for the considered scenarios has been validated through computer simulations.
%%%%
\subsection{No Symbol is Downloaded}
We first consider $\Pr\{ S_1=0\}$, which is given by
\begin{equation}\nonumber
\Pr\{ S_1=0\} =\lim_{L \to \infty} \frac{1}{L} \sum_{\ell=1}^L \Pr\{ S_1^{(\ell)}=0\}\, ,
\end{equation}
and compute $\Pr\{ S_1^{(\ell)}=0\}$. The recovery of the first symbol fails if the requesting node leaves the cell before completing the download. It also fails if the requesting node stays in the cell but no storage nodes are available or if it chooses to download from a storage node which departs before $t_n$ t.u. from the start of the download. Therefore,
\begin{align}
&\Pr\{ S_1^{(\ell)}=0\}= \nonumber \\
&\Pr\{ S_1^{(\ell)}=0 | \mathcal{B}_1^{(\ell)}\} \Pr\{ \mathcal{B}_1^{(\ell)}\}  + \Pr\{ S_1^{(\ell)}=0  | \mathcal{A}_1^{(\ell)}\} \Pr\{ \mathcal{A}_1^{(\ell)}\} \nonumber\\
&=b_1  + a_1\sum_{x_1 f_1}  \Pr\{ S_1^{(\ell)}=0  |   X_1^{(\ell)}=x_1,  F_1^{(\ell)}=f_1,  \mathcal{A}_1^{(\ell)}\} \cdot \nonumber \\
 &\qquad \qquad \qquad  \cdot \Pr\{F_1^{(\ell)}=f_1,X_1^{(\ell)}=x_1|  \mathcal{A}_1^{(\ell)}\}   \, . \label{e:PS1b}
\end{align}
The joint probability mass function of the number of storage nodes available for download and the number of storage nodes that depart before $W^{(\ell)}+\tn$ is independent of the departure time of the requesting node (note that this is an approximation if the requesting node is a storage node).
Hence, we have
\begin{align}
&\Pr\{F_1^{(\ell)}=f_1,X_1^{(\ell)}=x_1|  \mathcal{A}_1^{(\ell)}\} = \nonumber \\
&\Pr\{F_1^{(\ell)}=f_1 |X_1^{(\ell)}=x_1 \} \Pr\{X_1^{(\ell)}=x_1 \} = h(x_1) g(f_1,x_1)\,.\nonumber
\end{align}
The probability   $\Pr\{ S_1^{(\ell)}=0  |   X_1^{(\ell)}=x_1,  F_1^{(\ell)}=f_1,  \mathcal{A}_1^{(\ell)}\}$ is equal to 1 if there are no storage nodes available, i.e., $x_1=0$. Otherwise, 
it equals the probability to choose one of the $f_1$ storage nodes that leave the cell in $\tn$ t.u., i.e., $f_1/x_1$, with $f_1\leq x_1$. 
Since the probabilities involved in~(\ref{e:PS1b}) are all independent of $\ell$, we finally have
\begin{equation}\nonumber
\Pr\{ S_1=0\} =  b_1  + a_1 h(0) + a_1 \sum_{x_1 =1}^n \sum_{f_1 =0 }^{x_1} \frac{f_1}{x_1}    h(x_1)  g(f_1,x_1)\, .
\end{equation}

%%%%%%%%%%
\subsection{Partial and Complete Download}
To evaluate the probability that $k$ symbols are downloaded from the DS network, we start with the following limit
\begin{equation}\nonumber
\Pr\{  \Sv_{[k]} =\onev_k \}= \lim_{L \to \infty}   \frac{1}{L} \sum_{\ell=1}^L \Pr\{  \Sv_{[k]}^{(\ell)}=\onev_k \}\, .
\end{equation}
We consider the $\ell$th request and, similarly to the previous case, we will find that this probability is independent of $\ell$. We have
%\begin{align}\nonumber
%&\Pr\{  \Sv_{[k]}^{(\ell)} =\onev_k \}=\Pr\{  \Sv_{[k]}^{(\ell)}=\onev_k |  \mathcal{A}_k^{(\ell)} \} a_k\\
%&=\sum_{x f}  \Pr\{ S_k^{(\ell)}=1  |   X_k^{(\ell)}=x,  F_k^{(\ell)}=f,  \mathcal{A}_k^{(\ell)}\} \cdot \nonumber \\
%& \cdot \Pr\{F_k^{(\ell)}=f,X_k^{(\ell)}=x  , \Sv^{(\ell)}_{1,k-1}=\onev_{k-1}|  \mathcal{A}_k^{(\ell)}\} a_k\, .\nonumber
%\end{align}
\begin{align}\nonumber
&\Pr\{  \Sv_{[k]}^{(\ell)} =\onev_k \}=\sum_{x f}  \Pr\{  \Sv_{[k]}^{(\ell)}=\onev_k, X_k^{(\ell)}=x,  F_k^{(\ell)}=f \} \\
%&=\sum_{x f}  \Pr\{ S_k^{(\ell)}=1  |   X_k^{(\ell)}=x,  F_k^{(\ell)}=f ,\Sv^{(\ell)}_{1,k-1}=\onev_{k-1}\} \cdot \nonumber \\
%& \cdot \Pr\{F_k^{(\ell)}=f,X_k^{(\ell)}=x  , \Sv^{(\ell)}_{1,k-1}=\onev_{k-1}\} \nonumber \\ 
&=\sum_{x f}  \Pr\{ S_k^{(\ell)}=1  |   X_k^{(\ell)}=x,  F_k^{(\ell)}=f ,   \mathcal{A}_k^{(\ell)} \} a_k \cdot \nonumber \\
& \cdot \Pr\{F_k^{(\ell)}=f,X_k^{(\ell)}=x  , \Sv^{(\ell)}_{1,k-1}=\onev_{k-1}\}  \, .\nonumber
\end{align}
The probability $\Pr\{ S_k^{(\ell)}=1  |   X_k^{(\ell)}=x,  F_k^{(\ell)}=f,  \mathcal{A}_k^{(\ell)}\}$ for $x>0$ and $f<x$ equals the probability to choose one of the storage nodes that stay in the cell, i.e., $\frac{x-f}{x}$.

We now evaluate the probability $\Pr\{F_j^{(\ell)}=f,X_j^{(\ell)}=x  , \Sv^{(\ell)}_{1,j-1}=\onev_{j-1}\}$ for $j>1$, which will be also used for the computation of the probability of partial D2D download. For $f\leq x$, we have the following recursion
\begin{align}
&\Pr\{F_j^{(\ell)}=f,X_j^{(\ell)}=x  , \Sv^{(\ell)}_{1,j-1}=\onev_{j-1}\}= \nonumber\\
&g(f,x) \Pr\{X_j^{(\ell)}=x  , \Sv^{(\ell)}_{1,j-1}=\onev_{j-1}\}=\nonumber\\
&g(f,x)\sum_{x' f'}  \Pr\{X_j^{(\ell)}=x | X_{j-1}^{(\ell)}=x',F_{j-1}^{(\ell)}=f', S^{(\ell)}_{j-1}=1\} \cdot\nonumber\\
&\cdot \Pr\{S_{j-1}^{(\ell)}=1| F_{j-1}^{(\ell)}=f' ,X_{j-1}^{(\ell)}=x'  , \Sv^{(\ell)}_{1,j-2}=\onev_{j-2}\}  \cdot\nonumber\\
&\cdot  \Pr\{F_{j-1}^{(\ell)}=f',X_{j-1}^{(\ell)}=x'  , \Sv^{(\ell)}_{1,j-2}=\onev_{j-2}\}     \, . \label{e:PP}
\end{align}
We define $ N(x,x',f')\triangleq \Pr\{X_j^{(\ell)}=x | X_{j-1}^{(\ell)}=x',F_{j-1}^{(\ell)}=f', S^{(\ell)}_{j-1}=1\}$, which is equal to one if $x=x'-f'-1$ and $x'>f'$, and zero otherwise. The condition $x=x'-f'-1$ follows from the fact that the number of available storage nodes after a successful symbol download is equal to the number of storage nodes still alive, \mbox{$x'-f'$}, minus the storage node just used. The condition $x'>f'$ comes from the fact that the $(j-1)$th symbol download is assumed to be successful.   

It is easy to prove by induction that the probability (\ref{e:PP}) does not depend on $\ell$ . By defining $\gamma_j(x,f)\triangleq\Pr\{F_j^{(\ell)}=f,X_j^{(\ell)}=x  , \Sv^{(\ell)}_{1,j-1}=\onev_{j-1}\}$, we obtain the following recursion for $j>1$,
\begin{equation}\nonumber
\gamma_j(x,f)\!=\!g(f,x) a_{j-1}\!\!\sum_{x'=1}^n \sum_{f'=0}^{x'}  \! \frac{x'-f'}{x'}  N\!(x,x',f') \gamma_{j-1}\!(x',f')
\end{equation}
with initial condition $\gamma_1(x,f)=h(x)g(f,x)$. The probabilities $\gamma_j(x,f)$, $j\geq 1$, are equal to zero for $f>x$.

Finally, the probability of complete download from the DS network is
\begin{equation}\nonumber
\Pr\{  \Sv_{[k]} =\onev_k \}=a_k \sum_{x =1}^n \sum_{f=0 }^{x} \frac{x-f}{x}  \gamma_k(x,f)\, .
\end{equation}

Following a similar approach, we can compute the probability of partial download,
\begin{align}
\Pr \{\Sv_{[j]}=&\onev_{j}, S_{j+1}=0 \} =  \gamma_{j+1}(0,0) +    \nonumber\\
&+\sum_{x=1}^n \sum_{f=0}^x \Big( 1- \frac{x-f}{x}  a_{j+1}\Big)  \gamma_{j+1}(x,f)  \nonumber\, .
\end{align}

%Using (\ref{e:P0}), (\ref{e:Pk}), and (\ref{e:Ppart}),  we can compute the average D2D download time and the corresponding average number of D2D downloaded symbols. Consequently, the total file download delay in (\ref{e:Tdw}) can be evaluated.

The results above allow also to prove the following lemma.
\newtheorem{Lemma}{Lemma}
\begin{Lemma}\label{lemma}
 The average D2D download time for the $\ell$th request, $\TDD^{(\ell)}$, is independent of the specific request if the index $\ell$ is sufficiently large. 
 \end{Lemma}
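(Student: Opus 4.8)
The plan is to read the lemma off the computations already performed in this section. The $\ell$th request occupies the DS network for $k\tn$ t.u.\ on a complete download, for $(j+1)\tn$ t.u.\ on a partial download of $j<k$ symbols, and for $\tn$ t.u.\ on a complete failure, with probabilities $\Pr\{\Sv_{[k]}^{(\ell)}=\onev_k\}$, $\Pr\{\Sv_{[j]}^{(\ell)}=\onev_j,S_{j+1}^{(\ell)}=0\}$, and $\Pr\{S_1^{(\ell)}=0\}$, respectively. I would therefore start by writing $\TDD^{(\ell)}$ explicitly as
\begin{align}
\TDD^{(\ell)} = \mathbb{E}\{T^{(\ell)}\} = {} & k\tn\,\Pr\{\Sv_{[k]}^{(\ell)}=\onev_k\} + \tn\,\Pr\{S_1^{(\ell)}=0\} \nonumber \\
& + \sum_{j=1}^{k-1}(j+1)\tn\,\Pr\{\Sv_{[j]}^{(\ell)}=\onev_j,S_{j+1}^{(\ell)}=0\}. \nonumber
\end{align}
Since the weights $k\tn$, $(j+1)\tn$, and $\tn$ are fixed constants, it suffices to show that each of the three probabilities on the right-hand side is independent of $\ell$ once $\ell$ is sufficiently large.

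To this end I would revisit the explicit expressions derived above for these three probabilities. Each of them is a finite sum over the state variables $(x,f)$ of products of the quantities $h(x_1)$, $g(f,x)$, $a_i$, $b_i$, $N(x,x',f')$, and $\gamma_j(x,f)$. Of these, $g$, $a_i$, $b_i$, and $N$ were already shown to be independent of $\ell$ (and, for $g$ and the $a_i,b_i$, of the attempt index as well), so the only potential source of $\ell$-dependence is $h$ together with $\gamma_j$. This is exactly where the hypothesis enters: by the result of~\cite{Ped15}, $h(x_1)=\Pr\{X_1^{(\ell)}=x_1\}$ does not depend on $\ell$ for $\ell$ large. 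The $\ell$-independence of $\gamma_j$ then follows by the induction already indicated after~(\ref{e:PP}): the base case $\gamma_1(x,f)=h(x)g(f,x)$ inherits the large-$\ell$ independence from $h$, while the recursion for $j>1$ multiplies $\gamma_{j-1}$ only by the $\ell$-independent factors $g$, $a_{j-1}$, and $N$, so the property propagates to every $j\le k$.

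By linearity of expectation, every term in the displayed expression for $\TDD^{(\ell)}$ is then independent of $\ell$ for $\ell$ sufficiently large, and hence so is their finite sum, which is the claim. I expect the main point to be bookkeeping rather than analysis: the argument hinges on verifying that the ``sufficiently large $\ell$'' assumption is invoked through a single channel, namely the steady-state distribution $h$ of~\cite{Ped15}, and that no additional transient $\ell$-dependence is smuggled in through the recursion for $\gamma_j$. Once it is confirmed that the base case of that induction is the sole entry point for the hypothesis, the conclusion is immediate.
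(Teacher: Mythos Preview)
Your proposal is correct and follows essentially the same route as the paper: you write $\TDD^{(\ell)}$ as the weighted sum of the three event probabilities and then invoke their $\ell$-independence (for large $\ell$) established in the preceding derivations. The only difference is that you spell out explicitly which building blocks ($h$, $g$, $a_i$, $b_i$, $N$, $\gamma_j$) carry or do not carry $\ell$-dependence and trace the induction on $\gamma_j$, whereas the paper compresses this into a single sentence referring back to those results.
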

\begin{proof}
Similarly to the average D2D download delay, $\TDD^{(\ell)}$ is
\begin{align}
\TDD^{(\ell)}= &k \tn \Pr\{ \Sv^{(\ell)}_{[k]}=\onev_k \} + \tn \Pr \{S^{(\ell)}_1=0\} +\nonumber \\  
&+ \sum_{j=1}^{k-1}  (j+1) \tn \Pr \{\Sv^{(\ell)}_{[j]}=\onev_{j}, S^{(\ell)}_{j+1}=0 \} \nonumber\;\; [t.u.]\, .
\end{align}
The Lemma follows from the fact that the probabilities in the expression above are independent of $\ell$, when $\ell$ grows large.  
\end{proof}

\section{Results}
In this section, we consider the performance of a wireless network with $M=30$ nodes, departure rate $\mu=1$, and request rate $\omega=0.02$. 
We compare the average file download delay of the considered network with MDS-coded DS with the delay of the traditional scenario where the content is solely downloaded from the BS. We consider several $(n,k)$ MDS codes and also an uncoded scenario where one storage node in the cell stores the file. We denote by $T_\text{ref}=k \tbs$ the delay incurred in the traditional scenario, and we fix $T_\text{ref}=1$ t.u..

In Figs.~\ref{f:0.1}--\ref{f:0.001}, we show the gain that can be achieved using MDS-coded DS, by reporting the ratio between $T_\text{ref}$ and $\TDW$ as a function of the repair interval. In Fig.~\ref{f:0.1}, \ref{f:0.01}, and \ref{f:0.001}, $\tn$ is 10, 100, and 1000 times, respectively, smaller than $\tbs$. The results clearly show that MDS-coded DS can greatly improve the performance in terms of content download delay, provided that the update interval, $\Delta$, is sufficiently small.
\begin{figure}
\centering{} \includegraphics[width=0.96\columnwidth]{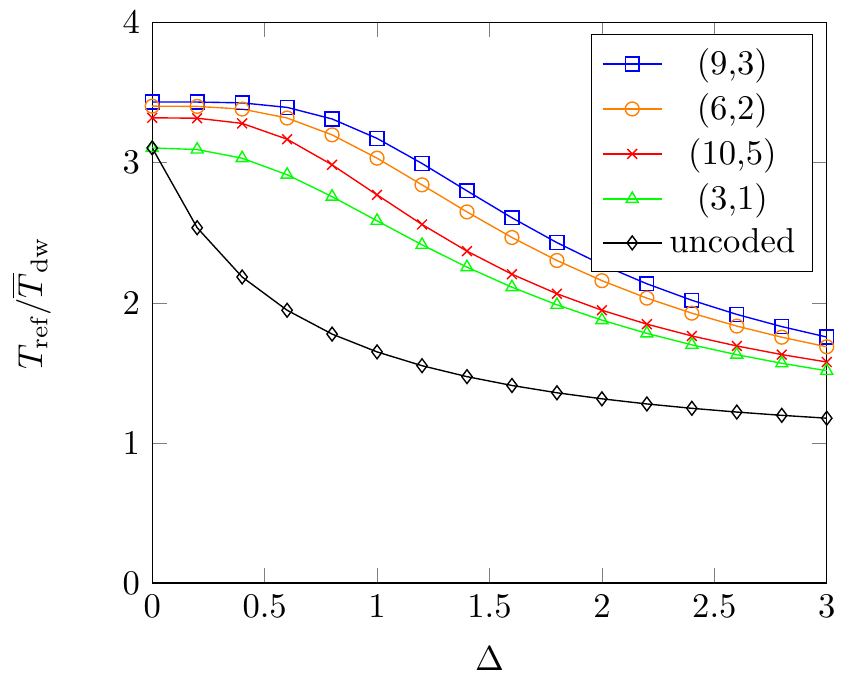}
\vspace{-4mm}
 \caption{Ratio between the file download delay without D2D communication and that of the scenario using MDS-coded DS. $\tbs=10 \tn$.}\label{f:0.1}
 \vspace{-3mm}
\end{figure}
\begin{figure}
\centering{} \includegraphics[width=0.96\columnwidth]{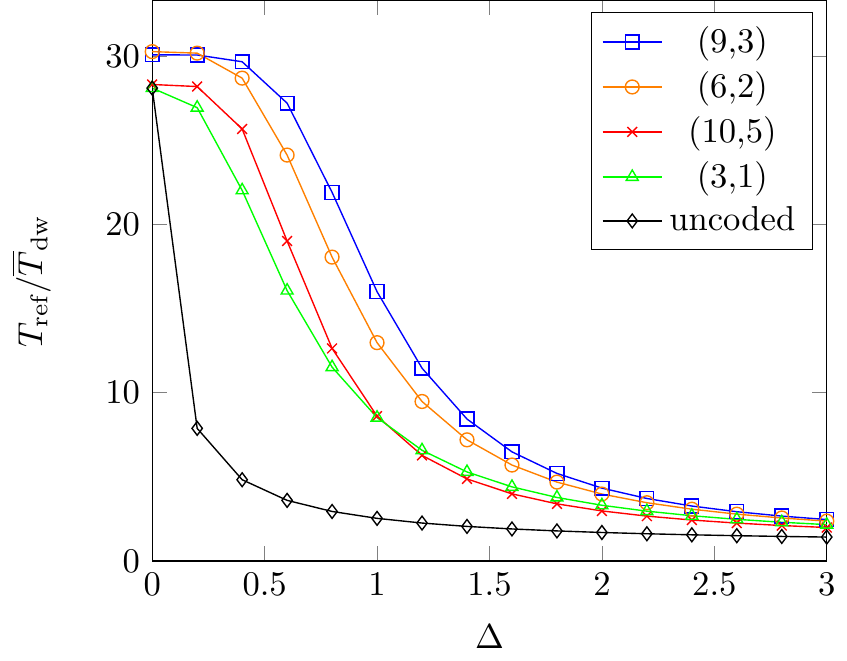}
\vspace{-4mm}
 \caption{Ratio between the file download delay without D2D communication and that of the scenario using MDS-coded DS. $\tbs=100 \tn$.}\label{f:0.01}
  \vspace{-3mm}
\end{figure}
\begin{figure}
\centering{} \includegraphics[width=0.96\columnwidth]{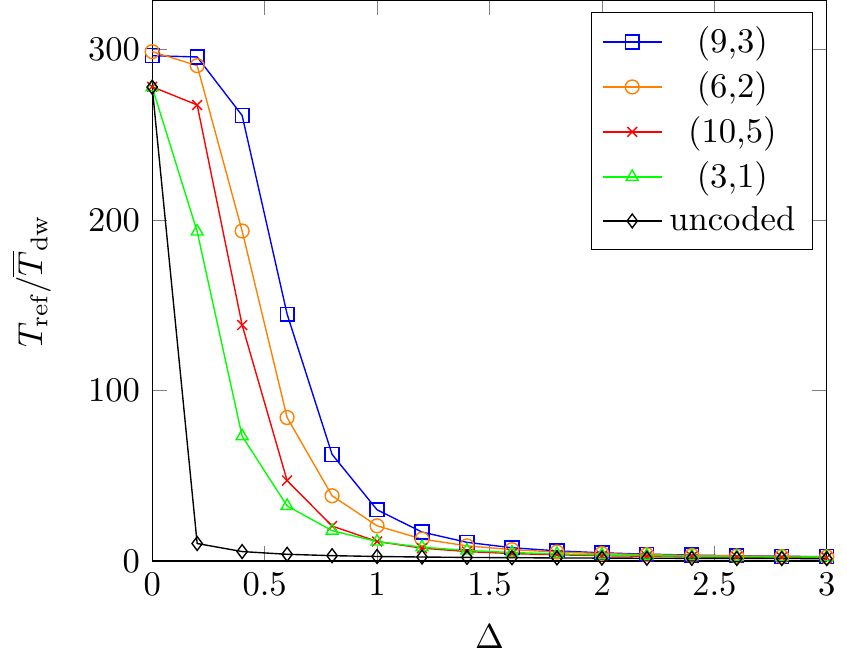}
\vspace{-4mm}
  \caption{Ratio between the file download delay without D2D communication and that of the scenario using MDS-coded DS. $\tbs=1000 \tn$.}\label{f:0.001}
   \vspace{-5mm}
\end{figure}

\section{Conclusions}
We considered the application of MDS-coded distributed storage to wireless networks and computed the average file download delay when users are allowed to use device-to-device communication. MDS-coded DS can dramatically reduce the download delay with respect to the traditional case where content is always downloaded from the base station.

%\appendices
\bibliographystyle{IEEEtran}
%\bibliography{bibliob}
%\bibliography{library,IEEEabrv,IEEEfull}
\end{document}